\documentclass[12pt,english]{article}
\usepackage[T1]{fontenc}
\usepackage[latin9]{inputenc}
\usepackage{geometry}
\geometry{verbose,tmargin=1.5cm,bmargin=1.5cm,lmargin=1.5cm,rmargin=1.5cm}
\usepackage{amssymb}
\usepackage{babel}
\usepackage{graphicx}
\usepackage{caption}
\usepackage{hyperref}
\date{}

\usepackage{subcaption}
\usepackage{amsmath}
\usepackage{amsfonts}
\usepackage{amssymb}
\usepackage{amsbsy}
\usepackage{amsthm}
\usepackage{algorithm}
\usepackage{algorithmic}
\usepackage{hyperref}
\usepackage{chicago}


\newcommand{\lambdahat}{\widehat{\lambda}}

\newcommand{\zap}[1]{}

\newcommand{\xhat}{\hat{x}}


\newcommand{\upthresh}{P}

\newcommand{\cmax}{1-(P^*)^{\frac1{k-1}}}

\newcommand{\ceil}{\mathrm{ceil}}


\newtheorem{theorem}{Theorem}

\newtheorem{corollary}{Corollary}

\newtheorem{lemma}{Lemma}

\date{}

\begin{document}

\title{ASYMPTOTIC VALIDITY OF THE BAYES-INSPIRED INDIFFERENCE ZONE PROCEDURE: 
THE NON-NORMAL KNOWN VARIANCE CASE}

\author{
	Saul Toscano-Palmerin\\ 
    	Peter I. Frazier\\[12pt]
	Cornell University \\
	257 Rhodes Hall \\
	232 Rhodes Hall \\
	Ithaca, NY 14853, USA\\
}

\maketitle

\section*{ABSTRACT}
We consider the indifference-zone (IZ) formulation of the ranking and selection problem in which the goal is to choose an alternative with the largest mean with guaranteed probability, as long as the difference between this mean and the second largest exceeds a threshold.
Conservatism leads classical IZ procedures to take too many samples in problems with many alternatives. The Bayes-inspired
Indifference Zone (BIZ) procedure, proposed in Frazier (2014), is
less conservative than previous procedures, but its proof of validity
requires strong assumptions, specifically that samples are normal, and variances are known 
with an integer multiple structure. In this paper, 
we show asymptotic validity of a slight modification of the original BIZ procedure 
as the difference between the best alternative and the second best goes to zero,
when the variances are known and finite, and samples are independent and identically distributed, but not necessarily normal.

\section{INTRODUCTION}
\label{sec:intro}

There are many applications where we have to choose the best alternative
among a finite number of simulated alternatives. For example, in inventory problems, we may want to choose the best inventory policy $(s,S)$ for a finite number of values of $s$ and $S$. 
This is called the ranking and selection problem.
A good procedure for addressing this problem should be both efficient and accurate, i.e. it should balance the number of samples it takes with the quality of its selection.

This paper considers the indifference-zone (IZ) formulation of the ranking and selection problem, in which we require that a procedure satisfy the IZ guarantee, i.e., that the best system be chosen with probability larger than some threshold $P^*$ given by the user, when the distance between the best system and the others is larger than some other user-specified threshold $\delta>0$.  The set of problem configurations satisfying this constraint on the difference in means is called the preference zone.
The paper \citeN{Bechhofer:1954} is considered the seminal work, and early work is 
presented in the monograph \citeN{Bechhofer:1968}. Some compilations of the theory developed in the area can be found
in \citeN{Bechhofer:1995}, \citeN{swisher:survey}, \citeN{kim:20062} and \citeN{kim:2007}. Other approaches, beyond the indifference-zone approach, include the Bayesian approach \cite{Frazier:Tutorial}, the optimal computing budget allocation approach \cite{chen:ocva}, the large deviations approach \cite{glynn2004large}, and the probability of good selection guarantee \cite{nelson2001selecting}. The last approach is similar to the indifference-zone formulation, but provides a more robust guarantee.

A good IZ procedure satisfies the IZ guarantee and requires as few samples as possible.
The first IZ procedures presented in \citeN{Bechhofer:1954}, \citeN{Paulson:01}, \citeN{Fabian}, \citeN{Rinott},
\citeN{Harmann:1988}, \citeN{hartmann:sequential}, \citeN{Paulson:1994} 
satisfy the IZ guarantee, but they usually take too many samples when there are many alternatives,
in part because they are conservative: their probability of correct selection (PCS) is much
larger than the probability specified by the user \cite{Kim:Conserv}. One reason for this
is that these procedures use Bonferroni's inequality, which leads then to sample more
than necessary. The Bonferonni-based bounds underlying these procedures become looser, and the tendency to take more samples than necessary increases, as the number of alternatives grow.
More recently, new algorithms were developed in \citeN{Kim:2001}, \shortciteN{Goldsman:2002}, 
\citeN{Hong}, and they improve performance but they still
use Bonferroni's inequality, and so the methods are inefficient
when there are many alternatives. Procedures in \citeN{kim:2011},
\citeN{dieker:2012} do not use Bonferroni's inequality 
when there are only three alternatives, but again use Bonferroni's inequality when 
comparing more than three alternatives.

In addition to Bonferroni's inequality, two other common sources of conservatism in indifference-zone ranking and selection procedures are the change from discrete time to continuous time often used to show IZ guarantees, and the fact that typically, the configuration under consideration is not a worst-case configuration \cite{Kim:Conserv}.  The difference between worst and typical cases tends to contribute the most to conservatism, with Bonferonni's inequality contributing second-most, and the continuous/discrete time difference contributing the least \cite{Kim:Conserv}.
Although the difference between the worst and typical cases is the largest contributor to conservatism, all indifference zone procedures must meet the PCS guarantee for all configurations in the preference zone, including worst-case configurations, and so this source of conservatism is fundamental to the indifference-zone formulation.  Thus, eliminating the use of Bonferroni's inequality remains an important route for reducing conservatism while still retaining the indifference-zone guarantee.

\citeN{Frazier:BIZ} presents a new sequential elimination
IZ procedure, called BIZ (Bayes-inspired Indifference Zone), 
that eliminates the use of Bonferroni's inequality, reducing conservatism. 
This procedure's lower bound on the worst-case probability of correct selection in the
preference zone is tight in continuous time, and almost tight in discrete time. In numerical experiments, the number of samples required
by BIZ is significantly smaller than that of procedures like the ${P}_{B}^{*}$ procedure of \citeN{Bechhofer:1968} and the KN procedure of \citeN{Kim:2001}, especially on problems with many alternatives. Unfortunately, the
proof from \citeN{Frazier:BIZ} that the BIZ procedure satisfies the IZ guarantee for the discrete-time case assumes that (1) samples are normally distributed; 
(2) variances are known;
and (3) the variances are either common across alternatives, or have an unrealistic integer multiple structure.

The contribution of this work is to prove the asymptotic validity of the BIZ procedure as $\delta$ goes to zero, retaining the assumption of known variances, but replacing assumptions (1) and (3) by the much weaker assumption of independent and identically distributed finite variance samples.  Thus, our proof allows a much broader set of sampling distributions than that allowed by \citeN{Frazier:BIZ}, including non-normal samples and general heterogeneous variances. 
We also show that this bound on worst-case PCS is asymptotically tight as $\delta$ goes to zero, 
showing that the BIZ procedure successfully eliminates conservatism due to Bonferonni's inequality in this more general setting, just as was demonstrated by \citeN{Frazier:BIZ} for more restricted settings.

To simplify our analysis, we analyze a slight modification of the version of the BIZ procedure presented in \citeN{Frazier:BIZ}, which keeps a certain parameter $\lambda^2_z$ fixed rather than letting it vary as did \citeN{Frazier:BIZ}.  Numerical experiments on typical cases show little difference in performance between the version of BIZ we analyze and the version in \citeN{Frazier:BIZ}.
We conjecture that a proof technique similar to the one presented here can be used to show asymptotic validity of the BIZ procedure when
the variances are unknown, and we present numerical experiments that support this belief.

This paper is organized as follows: In section \ref{IZR}, we recall the
indifference-zone ranking and selection problem. In section \ref{sec:BIZ},
we recall the Bayes-inspired IZ (BIZ) procedure from \citeN{Frazier:BIZ}. In section \ref{proof},
we present the proof of the validity of the algorithm when the variances
are known. In section \ref{numericalExperiments}, we present some numerical experiments. 
In section \ref{conclusion}, we conclude.

\section{INDIFFERENCE-ZONE RANKING AND SELECTION}
\label{IZR}

Ranking and Selection is a problem where we have to select the best system
among a finite set of alternatives, i.e. the system with the largest
mean. The method selects a system as the best based on the samples
that are observed sequentially over time. We suppose that samples
are identically distributed and independent, over time and across alternatives, and
each alternative $x$ has mean $\mu_{x}$. We define $\mu=(\mu_{1},\ldots,\mu_{k})$.

If the best system is selected, we say that the procedure has made the \emph{correct selection}
(CS). We define the \emph{probability of correct selection} as 
\[
\mbox{PCS}\left(\mu\right)=\mathbb{P}_{\mu}\left(\hat{x}\in\mbox{arg max}_{x}\mu_{x}\right)
\]
where $\hat{x}$ is the alternative chosen by the procedure and $\mathbb{P}_{\mu}$
is the probability measure under which samples from system $x$ have
mean $\mu_{x}$ and finite variance $\lambda_{x}^{2}$.

In the Indifference-Zone Ranking and Selection, the procedure is indifferent
in the selection of a system whenever the means of the populations
are nearly the same. Formally, let $\mu=\left[\mu_{1},\ldots,\mu_{k}\right]$
be the vector of the true means, the \emph{indifference zone} is defined
as the set $\left\{ \mu\in\mathbb{R}^{k}:\mu_{\left[k\right]}-\mu_{\left[k-1\right]}<\delta\right\} $.
The complement of the indifference zone is called the \emph{preference
zone} (PZ) and $\delta>0$ is called the indifference zone parameter.
We say that a procedure meets the \emph{indifference-zone (IZ) guarantee
}at $P^{*}\in\left(1/k,1\right)$ and $\delta>0$ if
\[
\mbox{PCS}\left(\mu\right)\geq P^{*}\mbox{ for all }\mu\in\mbox{PZ}\left(\delta\right).
\]
We assume $P^{*}>1/k$ because IZ guarantees can be meet by choosing
$\hat{x}$ uniformly at random from $\left\{ 1,\ldots,k\right\} $.

\section{THE BAYES-INSPIRED IZ (BIZ) PROCEDURE}
\label{sec:BIZ}

BIZ is an elimination procedure. This procedure maintains a set of
alternatives that are candidates for the best system, and it takes samples from each
alternative in this set at each point in time. At beginning, all alternatives
are possible candidates for the best system, and over the time alternatives are eliminated.
The procedure ends when there is only one alternative in the contention
set and this remaining alternative is chosen as the best. It is shown in \citeN{Frazier:BIZ} that the
algorithm ends in a finite number of steps with probability one. 

\citeN{Frazier:BIZ} shows that the BIZ procedure satisfies the IZ guarantee 
under the assumptions that samples are normally distributed, variances are known, and
the variances are either common across alternatives, or have an integer multiple structure.
The continuous time version of this procedure also satisfies the IZ guarantee, with a tight worst-case
preference-zone PCS bound.

A slight modification of the discrete-time BIZ procedure for unknown and/or heterogeneous sampling
variances is given below.  This algorithm takes a variable number of samples
from alternative $x$ at time $t$, and $n_{tx}$ is this number (its definition may found in the algorithm given below). 
This algorithm depends on a collection of integers $B_{1},\ldots,B_{k}$, $P^{*},c,\delta$
and $n_{0}$. Here, $n_{0}$ is the number of samples to use in the first
stage of samples, and $100$ is the recommended value for $n_{0}$ when the variances are unknown. The paramater
$B_{x}$ controls the number of samples taken from system $x$ in
each stage. To simplify our analysis, the procedure presented is a slight modification of the original BIZ 
procedure \cite{Frazier:BIZ} where $z \in \mbox{arg max}_{x\in A} \lambda^2_{x}$, instead 
of $z \in \mbox{arg min}_{x\in A} n_{tx} / \lambda^2_{x}$. According to numerical experiments on common cases, there is little difference in the PCS between the version of BIZ
we analyze and the version in \citeN{Frazier:BIZ}.

For each $t$, $x\in\left\{ 1,\ldots,k\right\} $, and subset $A\subset\left\{ 1,\ldots,k\right\} $,
we define a function
\[
q'_{tx}\left(A\right)=\mbox{exp}\left(\delta\beta_{t}\frac{Z_{tx}}{n_{tx}}\right)\left/\sum_{x'\in A}\mbox{exp}\left(\delta\beta_{t}\frac{Z_{tx'}}{n_{tx'}}\right),\right.\mbox{ }\beta_{t}=\frac{\sum_{x'\in A}n_{tx'}}{\sum_{x'\in A}\hat{\lambda}_{tx'}^{2}}
\]
where $\hat{\lambda}_{tx'}^{2}$ is the sample variance of all samples
from alternative $x$ thus far, and $Z_{tx}=Y_{n_{tx},x}$ is the sum of
the samples from alternative $x$ observed by stage $t$.

\paragraph{Algorithm: Discrete-time implementation of BIZ, for unknown and/or heterogeneous variances.}    
\begin{algorithmic}[1]   
\label{alg:hetero-BIZ}   
\REQUIRE $c \in [0,\cmax]$, $\delta>0$, $P^*\in(1/k,1)$, $n_0\ge0$ an integer, $B_1,\ldots,B_k$ strictly positive integers.  Recommended choices are $c=\cmax$, $B_1=\cdots=B_k=1$ and $n_0$ between $10$ and $30$.     If the sampling variances $\lambda^2_x$ are known, replace the estimators     
$\lambdahat^2_{tx}$ with the true values $\lambda^2_x$, and set $n_0=0$.     

\STATE For each $x$, sample alternative $x$ $n_0$ times and set $n_{0x} \leftarrow n_0$.     
Let $W_{0x}$ and $\lambdahat^2_{0x}$ be the sample mean and sample variance respectively of these samples.     Let $t\leftarrow 0$. Let $z \in \mbox{arg max}_{x\in A} \lambdahat^2_{x}$, where $\lambdahat^2_{x}$ is the empirical estimator of the variance $\lambda^2_x$ using $n_0$ samples if $x\in A$.    
\STATE Let $A \leftarrow \{ 1,\ldots, k\}$, $\upthresh \leftarrow P^*$.
\WHILE{$x\in\mbox{max}_{x\in A} q'_{tx}\left(A\right)<P$}
\WHILE{$\mbox{min}_{x\in A} q'_{tx}\left(A\right) \le c$}
 \STATE Let $x\in\mbox{arg min}_{x\in A} q_{tx}\left(A\right)$.
    \STATE Let $\upthresh \leftarrow \upthresh/(1-q_{tx}\left(A\right))$.     
\STATE Remove $x$ from $A$.
\ENDWHILE    
\STATE For each $x\in A$, let      $n_{t+1,x} = \ceil\left( \lambdahat^2_{tx} (n_{tz} + B_z) / \lambdahat^2_{tz} \right)$.     \STATE For each $x\in A$, if $n_{t+1,x}>n_{tx}$, take $n_{t+1,x}-n_{tx}$ additional samples from alternative $x$.  Let $W_{t+1,x}$ and $\lambdahat^2_{t+1,x}$ be the sample mean and sample variance respectively of all samples from alternative $x$ thus far.    
\STATE Increment $t$.
 \ENDWHILE
  \STATE Select $\xhat \in\mbox{arg max}_{x\in A} Z_{tx} / n_{tx}$ as our estimate of the best.

\end{algorithmic} 
\hspace{5 mm}

This algorithm generalizes the BIZ procedure with known common variance. In that case, we have that $B_1=\cdots=B_k=1$ and $n_{tx}=t$. The algorithm can be generalized to the continuous case (see  \citeN{Frazier:BIZ}). 

\section{ASYMPTOTIC VALIDITY WHEN THE VARIANCES ARE KNOWN}
\label{proof}

In this section we prove that the BIZ procedure satisfies asymptotically the IZ guarantee
when the variances are known. This means that we consider a collection of
ranking and selection problems parametrized by $\delta>0$.  
For the problem given $\delta$, we suppose that the vector of the true means
$\mu=\left[\mu_{1},\ldots,\mu_{k}\right]$ is equal to $\delta a$ for some fixed $a\in\mathbb{R}^{k}$ that
does not depend on $\delta$ and $a_{k}>a_{k-1}\geq \cdots \geq a_{1}$, $a_{k}-a_{k-1}>1$.
Moreover, the variances of the alternatives are finite, strictly greater than zero and do not depend on $\delta$.
We also suppose that samples from system $x\in\left\{ 1\ldots,k\right\} $ are identically distributed
and independent, over time and across alternatives. We also define
$\lambda_{z}^{2}:=\max_{i\in\left\{ 1\ldots,k\right\} }\lambda_{i}^{2}$.

Any ranking and selection algorithm can be viewed as mapping from
paths of the $k$-dimensional discrete-time random walk $\left(Y_{tx}:t\in\mathbb{N},x\in\left\{ 1,\ldots,k\right\} \right)$
onto selection decisions. 
Our proof uses this viewpoint, noting that 
the BIZ procedure's mapping from paths onto
selections decisions is the composition of three
simpler maps.

The first mapping is from the raw discrete-time random walk $\left(Y_{tx}:t\in\mathbb{N},x\in\left\{ 1,\ldots,k\right\} \right)$
onto a time changed version of this random walk, written as $\left(Z_{tx}:t\in\mathbb{N},x\in\left\{ 1,\ldots,k\right\} \right)$,
where we recall $Z_{tx}=Y_{n_{tx},x}$ is the sum of
the samples from alternative $x$ observed by stage $t$.

The second one maps this time-changed random walk through a non-linear
mapping for each $t,x$ and subset $A\subset\left\{ 1,\ldots,k\right\} $,
to obtain $\left(q_{tx}^{'}\left(A\right):t\in\mathbb{N},A\subset\left\{ 1,\ldots,k\right\} ,x\in A\right)$,
where

\[
q'_{tx}\left(A\right)=\mbox{exp}\left(\delta\beta_{t}\frac{Z_{tx}}{n_{tx}}\right)\left/\sum_{x'\in A}\mbox{exp}\left(\delta\beta_{t}\frac{Z_{tx}}{n_{tx}}\right)\right.:=q'\left(\left(Z_{tx}:x\in A\right),\delta,t\right)
\]
where we note that $n_{x}\left(t\right)$ and $\beta_{t}$ are deterministic
in the version of the known-variance BIZ procedure that we consider
here.

The third one maps the paths of $\left(q_{tx}^{'}\left(A\right):t\in\mathbb{N},A\subset\left\{ 1,\ldots,k\right\} ,x\in A\right)$
onto selection decisions. Specifically, this mapping begins with $A_0 = \{1,\ldots,k\}$,
$P_0=P^*$, 
and finds the first time $\tau_1$ that $q'_{tx}(A_0)$ falls above the threshold $P_0$,
or below the threshold $c$.
If the first case occurs, the alternative with the largest $q'_{\tau_1,x}(A_0)$ is selected as the best.
If the second case occurs, the alternative with the smallest
$q'_{\tau_1,x}(A_0)$ is eliminated, resulting in a new set $A_1$, a new selection threshold 
$P_1$ is calculated from $P_0$ and the eliminated alternative's value of $q'_{\tau_1,x}(A_0)$, and the process continues.
This process is repeated until an alternative is selected as the best.
Call this mapping $h$, so that the BIZ selection decision is
$h\left(\left(q_{tx}^{'}\left(A\right):t\in\mathbb{N},A\subset\left\{ 1,\ldots,k\right\} ,x\in A\right)\right)$.

\subsection{Proof Outline}
Based on this view of the BIZ procedure as a composition of three maps, we outline the main ideas of our proof here. 

Our proof first notes that the same selection decision is obtained if we apply
the BIZ selection map $h$ to a time-changed version of 
$\left(q_{tx}^{'}\left(A\right):t\in\mathbb{N},A\subset\left\{ 1,\ldots,k\right\} ,x\in A\right)$, specifically to 
\begin{equation*}
\left(q_{tx}\left(A\right):t\in\delta^2 \mathbb{N},A\subset\left\{ 1,\ldots,k\right\} ,x\in A\right),
\end{equation*}
where
$q_{tx}\left(A\right) := q'\left(\left(Z_{\frac{t}{\delta^{2}}x}:x\in A\right),\delta,t \right)$.

This discrete-time process is interpolated by the continuous-time process 
\begin{equation}
\left(q_{tx}\left(A\right):t\in\mathbb{R},A\subset\left\{ 1,\ldots,k\right\} ,x\in A\right).
\label{eq:q-continuous}
\end{equation}
If we apply the BIZ selection map $h$ to this continuous-time process, the selection decision will differ from BIZ's selection decision for $\delta>0$, but we show that this difference vanishes as $\delta\to0$.  Thus, our proof focuses on showing that, as $\delta\to0$, applying the BIZ selection map $h$ to \eqref{eq:q-continuous} produces a selection decision that satisfies the indifference-zone guarantee. 

To accomplish this, we use a functional central limit theorem for
$Z_{\frac{t}{\delta^{2}}x}$, which shows that a centralized
version of $Z_{\frac{t}{\delta^{2}}x}$ converges to a Brownian motion as $\delta$ goes to $0$.
This centralized version of $Z_{\frac{t}{\delta^{2}}x}$ is 
\[
\mathcal{C}_{x}\left(\delta,t\right):=\frac{Y_{n_{x}\left(t\right),x}-t\lambda_{x}^{2}\mu_{x}}{\frac{\lambda_{x}^{2}}{\lambda_{z}\delta}}.
\]
Rewriting $Z_{\frac{t}{\delta^{2}}x}$ in terms of $\mathcal{C}_{x}\left(\delta,t\right)$ and substituting into the definition of $q_{tx}(A)$ provides the expression
\begin{equation}
    q_{tx}\left(A\right)=q\left(\left(\mathcal{C}_{x}\left(\delta,t\right)\frac{\lambda_{x}^{2}}{\delta\lambda_{z}^{2}}+\frac{\lambda_{x}^{2}}{\lambda_{z}^{2}}\left(n_{0}+\frac{t}{\delta^{2}}\right)\delta a_{x}:x\in A\right),\delta,t \right).
    \label{eq:q}
\end{equation}

We will construct a mapping $f\left(\cdot,\delta\right)$ that takes as input the process
$\left(\mathcal{C}_{x}\left(\delta,t\right) : x\in \{1,\ldots,k\}, t\in\mathbb{R}\right)$,
calculates 
\eqref{eq:q-continuous}
from it,
applies the BIZ selection map $h$ to \eqref{eq:q-continuous}, 
and then returns $1$ if the correct selection was made,
and $0$ otherwise.
Thus, the correct selection event that results from applying
the BIZ selection map $h$ to \eqref{eq:q-continuous} is the result
of applying the mapping $f\left(\cdot,\delta\right)$
to the paths $t\mapsto\mathcal{C}_{x}\left(\delta,t\right)$ .

With these pieces in place, the last part of our proof is to observe that (1) $\mathcal{C}\left(\delta,\cdot\right)$
converges to a multivariate Brownian motion $W$ as $\delta$ goes
to 0; (2) the function $f$ has a continuity property that causes
\[
f\left(\mathcal{C}\left(\delta,\cdot\right),\delta\right)\Rightarrow g\left(W\right)
\]
where g is the selection decision from applying the BIZ procedure
in continuous time; and (3) the BIZ procedure satisfies the IZ guarantee
when applied in continuous time (Theorem~1 in \citeN{Frazier:BIZ}), and so $E[g(W)] \ge P^*$ with equality for the worst configurations in the preference zone.

\subsection{Preliminaries for the Proof of the Main Theorem}

In this section, we present preliminary results and definitions used in the proof of the main theorem: first, a central limit theorem Corollary~\ref{cor:CLT}; second, definitions of the functions $f(\cdot,\delta)$ and $g(\cdot)$; third, a continuity result Lemma~\ref{l:continuity}; and fourth, a result Lemma~\ref{l:interpolation} that allows us to change from discrete-time processes to continuous-time processes.

First, we are going to see that the centralized sum of the output
data $\mathcal{C}_x(\delta,t)$ converges to a Brownian motion in the sense of $D_{\infty}:=D[0,\infty)$,
which is the set of functions from $\left[0,\infty\right)$ to $\mathbb{R}$
that are right-continuous and have left-hand limits, with the Skorohod
topology. The definition and the properties of this topology may be
found in Chapter 3 of \citeN{billingsley:convergnce2}. 

We briefly recall the definition of convergence of random paths in
the sense of $D_{\infty}$. Suppose that we have a sequence of random
paths $\left(\mathcal{X}_{n}\right)_{n\geq0}^{\infty}$ such that
$\mathcal{X}_{n}:\varOmega\rightarrow D_{\infty}$ where $\left(\Omega,\mathcal{F},\mathbb{P}\right)$
is our probability space. We say that $\mathcal{X}_{n}\Rightarrow\mathcal{X}_{0}$
in the sense of $D_{\infty}$ if $P_{n}\Rightarrow P_{0}$ where $P_{n}:\mathcal{D}_{\infty}\rightarrow\left[0,1\right]$
are defined as $P_{n}\left[A\right]=\mathbb{P}\left[\mathcal{X}_{n}^{-1}\left(A\right)\right]$
for all $n\geq0$ and $\mathcal{D}_{\infty}$ are the Borel subsets
for the Skorohod topology.

The following lemma shows that the centralized sum of the output
data with $t$ changed by $t/\delta^{2}$ converges to a Brownian
motion in the sense of $D_{\infty}$. 

\begin{lemma}
Let $x\in\left\{ 1\ldots,k\right\} $, then
\[
\mathcal{C}_{x}\left(\delta,\cdot\right)\Rightarrow W_{x}\left(\cdot\right)
\]
as $\delta\rightarrow0$ in the sense of $D[0,\infty)$, where $W_{x}$
is a standard Brownian motion.
\end{lemma}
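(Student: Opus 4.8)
The plan is to recognize $\mathcal{C}_x(\delta,\cdot)$ as a Donsker-normalized partial-sum process and invoke the classical functional central limit theorem, with the only genuine work being control of the deterministic time change hidden in $n_x(t)$.

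First I would strip the $\delta$-dependence out of the summands. Writing each sample from alternative $x$ as $X_{i,x} = \mu_x + \xi_{i,x} = \delta a_x + \xi_{i,x}$, the variables $\xi_{i,x}$ are i.i.d.\ with mean $0$ and variance $\lambda_x^2$ and, crucially, their common distribution does not depend on $\delta$. Since $Y_{n,x} = \sum_{i=1}^n X_{i,x}$, subtracting the mean in the numerator of $\mathcal{C}_x(\delta,t)$ leaves $\sum_{i=1}^{n_x(t)}\xi_{i,x}$, so that $\mathcal{C}_x(\delta,t)$ is exactly $(\lambda_x^2/(\lambda_z\delta))^{-1}\sum_{i=1}^{n_x(t)}\xi_{i,x}$. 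This is the key simplification: because the $\delta$-dependence enters only through the subtracted mean, we are in the setting of a single fixed i.i.d.\ sequence with the number of summands tending to infinity, rather than a triangular array, so no Lindeberg-type condition is needed and finite variance alone suffices. A direct variance computation, using that $n_x(t)$ is of order $(\lambda_x^2/\lambda_z^2)(t/\delta^2)$, confirms that the variance of $\mathcal{C}_x(\delta,t)$ tends to $t$, which is what fixes the normalization.

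Next I would apply Donsker's theorem to the clean interpolation $\tilde{\mathcal{C}}_x(\delta,t) := (\lambda_x\sqrt{m_\delta})^{-1}\sum_{i=1}^{\lfloor m_\delta t\rfloor}\xi_{i,x}$, where $m_\delta := \lambda_x^2/(\lambda_z^2\delta^2)$ is the number of samples accumulated per unit of continuous time. Fixing any sequence $\delta_n\downarrow 0$ we have $m_{\delta_n}\to\infty$, and Donsker's theorem gives $\tilde{\mathcal{C}}_x(\delta_n,\cdot)\Rightarrow W_x$ in $D[0,T]$ for every $T$; since the limit is a.s.\ continuous, every $T$ is a continuity point and this upgrades to convergence in $D[0,\infty)$ by the standard characterization of Skorohod convergence on the half-line in terms of convergence on each $[0,T]$ (Chapter 3 of \citeN{billingsley:convergnce2}). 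The normalization $\lambda_x\sqrt{m_\delta} = \lambda_x^2/(\lambda_z\delta)$ matches the denominator of $\mathcal{C}_x$ exactly, which is why the limit is a standard Brownian motion rather than a time- or space-scaled one.

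Finally I would show that replacing $\tilde{\mathcal{C}}_x$ by the true $\mathcal{C}_x$ does not change the limit, via the converging-together lemma (see \citeN{billingsley:convergnce2}). The sample count $n_x(t)$ differs from $\lfloor m_\delta t\rfloor$ by only a uniformly bounded number of terms: the additive offset $n_0$ and the ceiling in $n_{t+1,x}=\ceil(\lambdahat^2_{tx}(n_{tz}+B_z)/\lambdahat^2_{tz})$ each contribute $O(1)$ summands, uniformly in $t$ on compacts. Bounding, via a maximal inequality, the supremum over $t\in[0,T]$ of a block of $O(1)$ consecutive increments of the mean-zero walk and dividing by $\sqrt{m_\delta}\to\infty$, this discrepancy tends to $0$ in probability; the analogous centering discrepancy is of order $\delta$. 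The main obstacle is precisely this last step: making the control of the step-function time change $t\mapsto n_x(t)$ uniform in $t$ and confirming that the ceiling operations and the $n_0$ offset are asymptotically negligible in the Skorohod metric. Once that is established, the converging-together lemma yields $\mathcal{C}_x(\delta,\cdot)\Rightarrow W_x$ and completes the proof.
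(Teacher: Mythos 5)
Your proposal is correct and takes essentially the same route as the paper's proof: a Donsker-type functional CLT for the centered partial-sum process on a reference grid (the paper invokes Theorem 19.1 of Billingsley for the floor-indexed version), followed by converging-together arguments showing that the floor/ceiling operations, the $n_0$ offset, and the $O(\delta)$ centering discrepancy are uniformly negligible on compacts (the paper applies its Theorem A.2 three times for exactly these three discrepancies, working pointwise in $\omega$ where you use a maximal inequality in probability). The only differences are cosmetic: you bundle the paper's three negligibility steps into a single estimate, and you make explicit the assumption---left implicit in the paper---that the centered sampling distribution is a fixed location family independent of $\delta$, which is precisely what lets finite variance replace a triangular-array Lindeberg condition.
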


\begin{proof}
By Theorem 19.1 of  \citeN{billingsley:convergnce2},

\[
\frac{Y_{n_{x}\left(t\right),x}-\mbox{floor}\left(\frac{\lambda_{x}^{2}}{\lambda_{z}^{2}}\left(\cdot\frac{1}{\delta^{2}}\right)\right)\mu_{x}}{\frac{\lambda_{x}^{2}}{\lambda_{z}}\sqrt{\frac{1}{\delta^{2}}}}\Rightarrow W_{x}\left(\cdot\right)
\]
in the sense of $D[0,\infty)$. 

Fix $w\in\Omega$. Observe that
\[
\frac{Y_{\mbox{floor}\left(\frac{\lambda_{x}^{2}}{\lambda_{z}^{2}}\left(\cdot\frac{1}{\delta^{2}}\right)\right),x}-\mbox{floor}\left(\frac{\lambda_{x}^{2}}{\lambda_{z}^{2}}\left(\cdot\frac{1}{\delta^{2}}\right)\right)\mu_{x}}{\frac{\lambda_{x}^{2}}{\lambda_{z}}\sqrt{\frac{1}{\delta^{2}}}}-\frac{Y_{\mbox{ceil}\left(\frac{\lambda_{x}^{2}}{\lambda_{z}^{2}}\left(\cdot\frac{1}{\delta^{2}}\right)\right),x}-\mbox{ceil}\left(\frac{\lambda_{x}^{2}}{\lambda_{z}^{2}}\left(\cdot\frac{1}{\delta^{2}}\right)\right)\mu_{x}}{\frac{\lambda_{x}^{2}}{\lambda_{z}}\sqrt{\frac{1}{\delta^{2}}}}\rightarrow0
\]
uniformly in $\left[0,s\right]$ for all $s\geq0$ and then by Theorem
A.2 
\[
\frac{Y_{\mbox{ceil}\left(\frac{\lambda_{x}^{2}}{\lambda_{z}^{2}}\left(\cdot\frac{1}{\delta^{2}}\right)\right),x}-\mbox{ceil}\left(\frac{\lambda_{x}^{2}}{\lambda_{z}^{2}}\left(\cdot\frac{1}{\delta^{2}}\right)\right)\mu_{x}}{\frac{\lambda_{x}^{2}}{\lambda_{z}}\sqrt{\frac{1}{\delta^{2}}}}\Rightarrow W_{x}\left(\cdot\right)
\]
in the sense of $D[0,\infty)$. 

Since $\frac{\frac{\lambda_{x}^{2}}{\lambda_{z}^{2}}t\frac{1}{\delta^{2}}-ceil\left(\frac{\lambda_{x}^{2}}{\lambda_{z}^{2}}t\frac{1}{\delta^{2}}\right)}{\frac{\lambda_{x}^{2}}{\lambda_{z}}\sqrt{\frac{1}{\delta^{2}}}}\rightarrow0$
uniformly on $[0,s]$ for every $s\geq0$, then by Theorem A.2

\[
\frac{Y_{\mbox{ceil}\left(\frac{\lambda_{x}^{2}}{\lambda_{z}^{2}}\left(\cdot\frac{1}{\delta^{2}}\right)\right),x}-\mbox{}\left(\frac{\lambda_{x}^{2}}{\lambda_{z}^{2}}\left(\cdot\frac{1}{\delta^{2}}\right)\right)\mu_{x}}{\frac{\lambda_{x}^{2}}{\lambda_{z}}\sqrt{\frac{1}{\delta^{2}}}}\Rightarrow W_{x}\left(\cdot\right).
\]

Finally, observe that for fixed $\omega\in\Omega$,
\begin{eqnarray*}
 &  & \frac{Y_{\mbox{ceil}\left(\frac{\lambda_{x}^{2}}{\lambda_{z}^{2}}\left(\cdot\frac{1}{\delta^{2}}\right)\right),x}-\mbox{}\left(\frac{\lambda_{x}^{2}}{\lambda_{z}^{2}}\left(\cdot\frac{1}{\delta^{2}}\right)\right)\mu_{x}}{\frac{\lambda_{x}^{2}}{\lambda_{z}}\sqrt{\frac{1}{\delta^{2}}}}-\frac{Y_{\mbox{ceil}\left(\frac{\lambda_{x}^{2}}{\lambda_{z}^{2}}\left(\cdot\frac{1}{\delta^{2}}\right)+n_{0}\frac{\lambda_{x}^{2}}{\lambda_{z}^{2}}\right),x}-\mbox{}\left(n_{0}\frac{\lambda_{x}^{2}}{\lambda_{z}^{2}}+\frac{\lambda_{x}^{2}}{\lambda_{z}^{2}}\left(\cdot\frac{1}{\delta^{2}}\right)\right)\mu_{x}}{\frac{\lambda_{x}^{2}}{\lambda_{z}}\sqrt{\frac{1}{\delta^{2}}}}\\
 & = & \frac{Y_{\mbox{ceil}\left(\frac{\lambda_{x}^{2}}{\lambda_{z}^{2}}\left(\cdot\frac{1}{\delta^{2}}\right)\right),x}-Y_{\mbox{ceil}\left(\frac{\lambda_{x}^{2}}{\lambda_{z}^{2}}\left(\cdot\frac{1}{\delta^{2}}\right)+n_{0}\frac{\lambda_{x}^{2}}{\lambda_{z}^{2}}\right),x}+\left(n_{0}\frac{\lambda_{x}^{2}}{\lambda_{z}^{2}}\right)\mu_{x}}{\frac{\lambda_{x}^{2}}{\lambda_{z}}\sqrt{\frac{1}{\delta^{2}}}}\\
 & \rightarrow & 0
\end{eqnarray*}
uniformly in $\left[0,t\right]$ for all $t\geq0$, and so by Theorem
A.2 the result follows.

\end{proof}

Now, we use the product topology in $D^{k}\left[0,\infty\right)$
for $k\in\mathbb{N}$. This topology may be described as the one under
which $\left(Z_{n}^{1},\ldots,Z_{n}^{k}\right)\rightarrow\left(Z_{0}^{1},\ldots,Z_{0}^{k}\right)$
if and only if $Z_{n}^{i}\rightarrow Z_{0}^{i}$ for all $i\in\left\{ 1,\ldots,k\right\} $.
See the Miscellany of  \citeN{billingsley:convergence}. The following corollary follows
from the previous result and independence.

\begin{corollary}
    \label{cor:CLT}
We have that
\[
\mathcal{C}\left(\delta,\cdot\right):=\left(\mathcal{C}_{x}\left(\delta,\cdot\right)\right)_{x\in A}\Rightarrow W\left(\cdot\right):=\left(W_{x}\left(\cdot\right)\right)_{x\in A}
\]
as $\delta\rightarrow0$ in the sense of $D_{\infty}^{k}$.
\end{corollary}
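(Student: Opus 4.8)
The plan is to deduce the joint convergence from the marginal convergence established in the preceding lemma, using only the independence of samples across alternatives. First I would record the two structural facts that make this possible. For each fixed $\delta>0$ the coordinate processes $\mathcal{C}_1(\delta,\cdot),\ldots,\mathcal{C}_k(\delta,\cdot)$ are mutually independent, since each $\mathcal{C}_x(\delta,\cdot)$ is a deterministic functional of the samples of alternative $x$ alone, and samples are independent across alternatives. Consequently, writing $P_\delta^x$ for the law of $\mathcal{C}_x(\delta,\cdot)$ on $D_\infty$, the law of the vector $\mathcal{C}(\delta,\cdot)$ on $D_\infty^k$ is exactly the product measure $\bigotimes_{x=1}^k P_\delta^x$. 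Likewise, the candidate limit $W(\cdot)=(W_x(\cdot))_x$ is taken to be a vector of independent standard Brownian motions, so its law is the product $\bigotimes_{x=1}^k P_0^x$, where $P_0^x$ denotes Wiener measure.

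With this in place, the core of the argument is the standard fact that weak convergence is preserved under taking products, provided the factor spaces are separable metric spaces. The space $D_\infty$ equipped with the Skorohod topology is separable (indeed Polish), so this applies. Concretely, I would test against the convergence-determining class of functions of product form $\Phi(w_1,\ldots,w_k)=\prod_{x=1}^k f_x(w_x)$ with each $f_x:D_\infty\to\mathbb{R}$ bounded and continuous; on a product of separable metric spaces such functions determine weak convergence, which is the content of the product-space remarks in the Miscellany of \citeN{billingsley:convergence}. By independence,
\[
\E\!\left[\Phi\big(\mathcal{C}(\delta,\cdot)\big)\right]=\prod_{x=1}^k \E\!\left[f_x\big(\mathcal{C}_x(\delta,\cdot)\big)\right],
\]
and by the preceding lemma each factor converges to $\E[f_x(W_x)]$ as $\delta\to0$. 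Hence the product converges to $\prod_{x=1}^k \E[f_x(W_x)]=\E[\Phi(W)]$, which, the class being convergence-determining, yields $\mathcal{C}(\delta,\cdot)\Rightarrow W(\cdot)$ in the product topology on $D_\infty^k$.

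An equivalent and perhaps more transparent route, which I would keep in reserve, is to invoke Skorohod's representation theorem on each coordinate: realize $\mathcal{C}_x(\delta,\cdot)\to W_x$ almost surely on a single probability space for each $x$, then form the product of these spaces so the coordinates remain independent. Coordinatewise almost-sure convergence is precisely convergence in the product topology, and almost-sure convergence implies weak convergence, giving the claim.

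The step requiring the most care --- though it is mild --- is the interaction between the product topology and weak convergence. One must use that $D_\infty$ is separable so that the Borel $\sigma$-algebra of $D_\infty^k$ coincides with the product $\sigma$-algebra and so that product-form test functions are convergence-determining; without separability, or without the independence across alternatives, marginal convergence would not by itself force joint convergence. A minor additional point is that $\delta$ ranges over a continuum rather than a sequence: since weak convergence as $\delta\to0$ is equivalent to weak convergence along every sequence $\delta_n\to0$, the argument above applies verbatim along each such sequence and hence gives the stated convergence.
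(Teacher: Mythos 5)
Your proposal is correct and follows essentially the same route as the paper, which simply asserts that the corollary ``follows from the previous result and independence,'' citing the Miscellany of \citeN{billingsley:convergence} for the product-topology facts on $D^{k}\left[0,\infty\right)$. Your write-up merely supplies the standard details (separability of $D_\infty$, product-form convergence-determining test functions, and the reduction of $\delta\rightarrow0$ to sequences) that the paper leaves implicit.
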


Now that we have obtained this functional central limit theorem for 
$\mathcal{C}\left(\delta,\cdot\right)$, we now continue along the proof outline and define the function $f(\cdot,\delta)$ that was sketched there.
This function has three parts: first, computing a ``non-centralized'' path from an arbitrary input ``centralized'' path in $D\left[0,\infty\right)^{k}$; second, applying the BIZ selection map $h$ to this non-centralized path; and third, reporting whether selection was correct or not.

To accomplish the first part, for each $F\in D\left[0,\infty\right)^{k}$, 
we define $q_{tx}^{F,\delta}\left(A\right)$ as
\[
q_{tx}^{F,\delta}\left(A\right)=q'\left(\left(F_{x}\left(t\right)\frac{\lambda_{x}^{2}}{\delta\lambda_{z}^{2}}+\frac{\lambda_{x}^{2}}{\lambda_{z}^{2}}\left(n_{0}+\frac{t}{\delta^{2}}\right)\delta a_{x}:x\in A\right),\delta,A\subset\left\{ 1,\ldots,k\right\} \right).
\]
Note that if we replace $F$ by $\mathcal{C}\left(\delta,t\right)$,
we get $q_{tx}\left(A\right)$ in \eqref{eq:q}.

To accomplish the second and third parts, we define
$f\left(F,\delta\right)$ to be obtained by applying the BIZ selection map $h$ to the process
$\left(q_{tx}^{F,\delta}\left(A\right):t\in\mathbb{R},A\subset\left\{ 1,\ldots,k\right\} ,x\in A\right)$, and then reporting whether the selection was correct.
More precisely, $f(F,\delta)$ is defined to be
\begin{eqnarray*}
f\left(F,\delta\right) & = & \begin{cases}
    1 & \text{if $h\left(\left(q_{tx}^{F,\delta}\left(A\right):t\in\mathbb{R},A\subset\left\{ 1,\ldots,k\right\} ,x\in A\right)\right) = k$,}\\
0 & \text{otherwise.}
\end{cases}
\end{eqnarray*}

We now construct a function $g(\cdot)$ that, when applied to the path of a $k$-dimensional standard Brownian motion, will be equal in distribution to the indicator of the correct selection event from the continuous-time BIZ procedure from \citeN{Frazier:BIZ} 
to a transformed problem that does not depend on $\delta$.

We construct $g$ analogously to $f(\cdot,\delta)$, but we replace the path $q_{tx}^{F,\delta}$ used in the construction of $f(\cdot,\delta)$ by a new path $q_{tx}^{F}$ that doesn't depend on $\delta$, and is obtained by taking the limit as $\delta\to0$.  This path is
\[
q_{tx}^{F}\left(A\right):=\mbox{exp}\left(\frac{F_{x}\left(t\right)}{\lambda_{z}}+\frac{1}{\lambda_{z}^{2}}ta_{x}\right)/\sum_{x^{'}\in A}\mbox{exp}\left(\frac{F_{x'}\left(t\right)}{\lambda_{z}}+\frac{1}{\lambda_{z}^{2}}ta_{x^{'}}\right).
\]
Then, $g$ is defined to be
\begin{eqnarray*}
g\left(F\right) & = & \begin{cases}
    1 & \text{if $h\left(\left(q_{tx}^{F}\left(A\right):t\in\mathbb{R},A\subset\left\{ 1,\ldots,k\right\} ,x\in A\right)\right) = k$,}\\
0 & \text{otherwise.}
\end{cases}
\end{eqnarray*}

In the proof of the main theorem, we will show that 
\[
f\left(\mathcal{C}\left(\delta,\cdot\right),\delta\right)\Rightarrow g\left(W\right)
\]
as $\delta\rightarrow0$ in distribution. We will use the following lemma, which shows a continuity property.
A proof of Lemma~\ref{l:continuity} may be found in a full version of this paper \cite{fullpaper},
which will be submitted soon to arXiv. 

\begin{lemma}
Let $\left\{ \delta_{n}\right\} \subset\left(0,\infty\right)$ such
that $\delta_{n}\rightarrow0$. If $D_{s}\equiv\{Z\in D\left[0,\infty\right)^{k}:\mbox{ if }\left\{ Z_{n}\right\} \subset D\left[0,\infty\right)^{k}\mbox{ and }$
$\mbox{lim}{}_{n}d_{\infty}\left(Z_{n},Z\right)=0$ , then the sequence
$\left\{ f\left(Z_{n},\delta_{n}\right)\right\} $ converges to $\left\{ g\left(Z\right)\right\} $$\left.\right\} $$ $,
then $ $$\mathbb{P}\left(W\text{\ensuremath{\in}}D_{s}\right)=1$.
\label{l:continuity}
\end{lemma}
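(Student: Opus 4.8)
The plan is to exhibit a full-probability set of Brownian paths contained in $D_s$. Since $W$ has continuous paths almost surely, I would work on that event, where Skorohod convergence $d_{\infty}(Z_n,Z)\to0$ to a continuous limit $Z$ is equivalent to uniform convergence on every compact $[0,s]$. The first step is to check that the passage from an input path to its $q$-path is continuous in this regime: if $Z_n\to Z$ uniformly on compacts and $\delta_n\to0$, then $q^{Z_n,\delta_n}_{tx}(A)\to q^{Z}_{tx}(A)$ uniformly on compacts for every $A$ and $x\in A$. This is precisely the $\delta\to0$ limit already built into the definition of $q^{F}$, combined with the Lipschitz continuity of the softmax in its arguments; I would not grind through the constants. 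Because $f(\cdot,\delta_n)$ and $g(\cdot)$ apply the \emph{same} selection map $h$ and the \emph{same} correct-selection criterion to these $q$-paths, the only possible obstruction to $f(Z_n,\delta_n)\to g(Z)$ is a discontinuity of $h$ (and of the indicator) at the limiting continuous path $q^{Z}$. It therefore suffices to produce a full-probability set of paths on which this selection functional is continuous.

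Recall that $h$ runs through at most $k-1$ rounds: in round $j$, with active set $A_j$ and threshold $P_j$, it finds the first time $\tau_{j+1}$ at which $\max_{x\in A_j}q_{tx}(A_j)$ reaches $P_j$ or $\min_{x\in A_j}q_{tx}(A_j)$ reaches $c$, and then either selects the maximizing alternative or eliminates the minimizing one and updates $(A_{j+1},P_{j+1})$. I would define a good set $G$ of continuous $q$-paths by requiring, in every round: (i) a transversal crossing, meaning the relevant boundary functional strictly passes its threshold in every right-neighborhood of $\tau_{j+1}$; (ii) the maximum and the minimum do not reach their thresholds simultaneously at $\tau_{j+1}$; and (iii) the maximizing (resp.\ minimizing) alternative at $\tau_{j+1}$ is unique. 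On $G$ each round's passage time is a continuous functional of the $q$-path under uniform-on-compacts convergence, and each round's decision is locally constant; an induction over the finitely many rounds then yields continuity of $h$ composed with the indicator. Hence $q^{Z}\in G$ implies $Z\in D_s$, and it remains only to show $\mathbb{P}(q^{W}\in G)=1$.

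Proving $\mathbb{P}(q^{W}\in G)=1$ is the crux, and the main obstacle is that the thresholds $P_j$ and the times $\tau_j$ are themselves random and path-dependent, so fixed-level results do not apply directly. My plan is to remove this dependence by conditioning through the strong Markov property: at $\tau_j$ the pair $(A_j,P_j)$ is $\mathcal{F}_{\tau_j}$-measurable, while the increments of $W$ after $\tau_j$ form a fresh $k$-dimensional Brownian motion independent of $\mathcal{F}_{\tau_j}$. Conditionally on $\mathcal{F}_{\tau_j}$ the level $P_j$ is deterministic, so requirement (i) reduces to the classical fact that a smooth function of Brownian motion with non-degenerate diffusion coefficient crosses a fixed level transversally at its first passage, immediately visiting both sides, by the strong Markov property and the law of the iterated logarithm. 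For (iii), conditionally on $\mathcal{F}_{\tau_j}$ the difference $q^{W}_{tx}(A_j)-q^{W}_{tx'}(A_j)$ of two coordinates is a continuous process with an absolutely continuous marginal at the now conditionally deterministic passage time, so an exact tie has probability zero; requirement (ii) is a simultaneous-hitting event of two distinct scalar processes and likewise has probability zero. Summing these conditional null events over the at most $k-1$ rounds and the finitely many subsets $A$, and then integrating out the conditioning, gives $\mathbb{P}(q^{W}\in G)=1$ and therefore $\mathbb{P}(W\in D_s)=1$. The almost-sure finiteness of the number of rounds, needed for the induction to terminate, is the termination property established in \citeN{Frazier:BIZ}.
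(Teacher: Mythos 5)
You cannot be checked line-by-line against the paper's own argument, because the paper does not contain one: the proof of Lemma~\ref{l:continuity} is explicitly deferred to the unpublished full version. Judged on its own terms, your architecture is the natural one and consistent with the role the lemma plays in the paper (feeding the extension of the continuous mapping theorem, Theorem~5.5 of Billingsley): reduce Skorohod convergence to uniform-on-compacts convergence at a.s.\ continuous limits; show $(Z_n,\delta_n)\mapsto q^{Z_n,\delta_n}$ converges u.o.c.\ to $q^{Z}$; isolate a good set $G$ of limit $q$-paths (transversal crossings, no simultaneous max/min crossings, unique argmax/argmin at each crossing time) on which the finitely-many-round functional $h$ composed with the correct-selection indicator is continuous; and prove $\mathbb{P}(q^{W}\in G)=1$, invoking a.s.\ termination for the induction to close. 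The first steps are routine, although the u.o.c.\ convergence of the $q$-paths does require checking that the $\ceil$ rounding inside $n_{tx}$ and $\beta_{t}$, the $n_0$ offset, and the behavior near $t=0$ wash out uniformly, which you wave at.

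The genuine gap is in the crux, $\mathbb{P}(q^{W}\in G)=1$, specifically your treatment of requirements (ii) and (iii). You claim that after conditioning on $\mathcal{F}_{\tau_j}$ the relevant passage time is ``conditionally deterministic,'' so that ties are excluded by absolute continuity of a marginal at a fixed time. That reduction is false: conditioning on $\mathcal{F}_{\tau_j}$ freezes $\tau_j$, $A_j$ and $P_j$, but the crossing time $\tau_{j+1}$ of the current round remains random---it is a first-passage time of the post-$\tau_j$ path---and absolute continuity of the law of $q^{W}_{tx}-q^{W}_{tx'}$ at each \emph{fixed} $t$ says nothing about its value at the stopping time $\tau_{j+1}$, which is defined through the very same paths. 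A tie at the minimum, for instance, is exactly the event that two coordinates' first-passage times to level $c$ coincide, and the coordinates here are strongly coupled through the softmax normalization, so independence-based or fixed-time arguments do not apply. Ruling this out needs a real argument: for example, that $\{q_{tx}=q_{tx'}=c\}$ is a codimension-two set which the nondegenerate driving Brownian motion a.s.\ never hits (with Girsanov to dispose of the drift on compacts), or structural observations such as that a tie at the maximum is impossible once $P_j>1/2$ because the $q$-values sum to one. The same objection applies verbatim to your requirement (ii)---the simultaneous crossing of $c$ by the minimum and $P_j$ by the maximum---which you simply declare null. Since these null-set statements are precisely what the lemma exists to certify (the transversality part you do handle correctly via strong Markov plus a Blumenthal/LIL argument), asserting them leaves the proof incomplete.
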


The following lemma shows that the difference in the correct selection events obtained from applying the BIZ selection map $h$ to the discrete-time and continuous-time versions of $q_{tx}(A)$ vanish as $\delta$ goes to $0$.
A proof of Lemma~\ref{l:interpolation} may be found in a full version of this paper \cite{fullpaper}.

\begin{lemma}
    $\lim_{\delta\rightarrow0} \mathbb{P}\left( h\left( \left(q_{tx}^{'}\left(A\right):t\in\mathbb{N},A\subset\left\{ 1,\ldots,k\right\} ,x\in A\right) \right) =k\right) = \lim_{\delta\rightarrow0}
 \mathbb{P}\left(
 f\left(\mathcal{C}\left(\delta,t\right),\delta\right)=1
 \right)$.

\label{l:interpolation}
\end{lemma}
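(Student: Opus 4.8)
The plan is to realize both sides of the identity as the probability that the BIZ selection map $h$ makes the correct decision when applied to one and the same continuous-time path $q_{tx}(A)$, the only difference being whether $h$ reads that path on the grid $\delta^2\mathbb{N}$ or on all of $\mathbb{R}$; the entire content of the lemma is then that these two readings almost never disagree once $\delta$ is small. I would begin with two pathwise identities. First, $h$ depends on $(q'_{tx}(A):t\in\mathbb{N})$ only through the order in which coordinates cross the thresholds $P$ and $c$, so the strictly monotone reparametrization relating the stage index $t\in\mathbb{N}$ to the grid $\delta^2\mathbb{N}$ leaves it unchanged: $h\left((q'_{tx}(A):t\in\mathbb{N})\right)=h\left((q_{tx}(A):t\in\delta^2\mathbb{N})\right)$ for every path, with $q_{tx}(A)=q'\left((Z_{t/\delta^2,x}:x\in A),\delta,t\right)$. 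Second, by the definition of $q^{F,\delta}_{tx}(A)$ and the remark following it, $q^{\mathcal{C}(\delta,\cdot),\delta}_{tx}(A)=q_{tx}(A)$, so $f(\mathcal{C}(\delta,\cdot),\delta)=\ind{C_\delta=k}$ where $C_\delta:=h\left((q_{tx}(A):t\in\mathbb{R})\right)$. Writing $D_\delta:=h\left((q_{tx}(A):t\in\delta^2\mathbb{N})\right)$, the two sides of the lemma are $\lim_\delta\P(D_\delta=k)$ and $\lim_\delta\P(C_\delta=k)$, and it suffices to show that both converge to the common value $\E[g(W)]$.

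Next I would pass to almost-sure convergence. By Corollary~\ref{cor:CLT}, $\mathcal{C}(\delta,\cdot)\Rightarrow W$ in $D_\infty^k$, so along any sequence $\delta_n\to0$ the Skorohod representation theorem furnishes, on a common probability space, versions with $\tilde{\mathcal{C}}(\delta_n,\cdot)\to\tilde W$ almost surely in the Skorohod topology; since $\tilde W$ has continuous paths, this convergence is uniform on compact time sets. Lemma~\ref{l:continuity} gives $\P(\tilde W\in D_s)=1$, whence the continuously-monitored indicators $\ind{C_{\delta_n}=k}=f(\tilde{\mathcal{C}}(\delta_n,\cdot),\delta_n)$ converge a.s. to $g(\tilde W)$. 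It therefore remains to prove that the grid-monitored indicators $\ind{D_{\delta_n}=k}$ converge a.s. to the \emph{same} limit $g(\tilde W)$. Granting this, bounded convergence yields $\P(D_{\delta_n}=k)\to\E[g(W)]$ and $\P(C_{\delta_n}=k)\to\E[g(W)]$ along every sequence $\delta_n\to0$, so both full limits exist and agree, proving the lemma.

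The heart of the argument, and the step I expect to be the main obstacle, is the pathwise comparison of grid and continuous monitoring on the limiting configuration. Continuous-time BIZ makes at most $k$ decisions, each occurring at a first-passage time of a coordinate of the continuous process $q^{\tilde W}_{tx}(A)$ across one of the fixed levels $P_i$ or $c$. I would show that on a set of full Brownian measure each such crossing is \emph{transversal}: the path is not tangent to the level at the crossing time, and distinct threshold-crossing events occur at distinct times. Transversality at fixed levels holds almost surely because the relevant one-dimensional functionals of $\tilde W$ almost surely do not touch a prescribed level at a local extremum, and distinct crossings almost surely do not coincide. Given transversality together with the uniform closeness of $\tilde{\mathcal{C}}(\delta_n,\cdot)$ to $\tilde W$ and the vanishing grid spacing $\delta_n^2$, the first grid point past each continuous crossing registers the same threshold being crossed by the same alternative, so the successive updates of $A$ and $P$ inside $h$ are identical under the two monitorings; a brief excursion that crosses a threshold and returns within a single grid cell, invisible to the grid, is ruled out by the absence of tangential crossings together with the uniform modulus of continuity. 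Propagating this matching through the finitely many stages, while the path converges and the grid refines simultaneously, is the delicate bookkeeping; the transversality input and the joint dependence on $\delta_n$ of both the monitored path and the monitoring grid are what make it nontrivial, whereas the reductions above are routine.
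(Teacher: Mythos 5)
The paper itself does not contain a proof of Lemma~\ref{l:interpolation} --- it defers to the full version \cite{fullpaper} --- so your attempt can only be judged against the proof outline, with which your architecture agrees. Your two pathwise reductions are correct and are exactly what the paper intends: the time-change identity $h\left(\left(q'_{tx}(A):t\in\mathbb{N}\right)\right)=h\left(\left(q_{tx}(A):t\in\delta^{2}\mathbb{N}\right)\right)$ is stated explicitly in the paper's outline, and the identity $q^{\mathcal{C}(\delta,\cdot),\delta}_{tx}(A)=q_{tx}(A)$ is the remark following the definition of $q^{F,\delta}_{tx}$. Likewise, handling the right-hand side by Corollary~\ref{cor:CLT}, the Skorohod representation, Lemma~\ref{l:continuity}, and bounded convergence mirrors how the paper itself obtains its equation (4) in the proof of Theorem~\ref{t:main}, and your plan of proving that the grid-monitored indicators converge a.s.\ to the same $g(\tilde W)$ is the natural route; note that you end up proving slightly more than the lemma asserts (that both limits exist and equal $\mathbb{E}[g(W)]$), which is consistent with how the lemma is used downstream.

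There is, however, one concrete misstep at the crux. You assert that after each crossing ``the successive updates of $A$ and $P$ inside $h$ are identical under the two monitorings.'' The $A$-updates are identical, but the $P$-updates are not: under continuous monitoring the eliminated coordinate is read exactly at the level $c$ (by path continuity of $q^{\tilde W}$), so the continuous-time thresholds are the \emph{deterministic} values $P^{*}/(1-c)^{j}$; under grid monitoring the read value satisfies $q_{\tau',x}\le c$ with a strict overshoot, so the grid-run thresholds are random perturbations $P'_{j}$ of $P_{j}$. Taken literally, ``identical'' breaks the stage-by-stage propagation, since at stage $j+1$ the two runs compare $\max_{x}q$ against different levels. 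The repair stays within your own toolbox but must be made explicit: bound the overshoot by the modulus of continuity of the coupled paths over a single grid cell of width $\delta_{n}^{2}$ plus their sup-distance on a compact horizon, conclude $P'_{j}\rightarrow P_{j}$ a.s., and then apply your transversality property at the finitely many deterministic levels $c$ and $P^{*}/(1-c)^{j}$ rather than at the random grid thresholds. This is also where you need two facts you mention only implicitly: nondegeneracy of the diffusion coefficient of $q^{\tilde W}_{tx}$ at interior levels (so that tangential crossings of these specific levels are null events), and a.s.\ finite termination of the continuous-time procedure, so that the entire comparison takes place on one compact time window on which Skorohod convergence to the continuous limit is uniform. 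Finally, your matching should explicitly allow the grid run to perform several eliminations at the same grid time (the inner while loop), which in the limit correspond to the a.s.\ distinct continuous crossing times. With these additions the argument goes through.
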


\subsection{The Main Result}

\begin{theorem}
\label{t:main}
If samples from system $x\in\left\{ 1\ldots,k\right\} $ are identically
distributed and independent, over time and across alternatives, then
$\mbox{lim}_{\delta\rightarrow0}\mbox{PCS}(\delta) \geq P*$
provided $\mu_{k}=a_{k}\delta,\mu_{k-1}=a_{k-1}\delta,\ldots,\mu_{1}=a_{1}\delta$,
$a_{k}>a_{k-1}\geq \cdots \geq a_{1}$, $a_{k}-a_{k-1}\ge1$, and the variances are finite and
do not depend on $\delta$.

Furthermore,
\[
\inf_{a\in PZ\left(1\right)}\mbox{lim}_{\delta\rightarrow0}\mbox{PCS}(\delta)=P^{*}
\]
where $PZ\left(1\right)=\left\{ a\in\mathbb{R}^{k}:a_{k}-a_{k-1}>1, a_{k}>a_{k-1}\geq \cdots \geq a_{1} \right\} $. 

\end{theorem}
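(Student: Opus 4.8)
The plan is to prove the first claim by reducing the discrete-time probability of correct selection to the expectation $\mathbb{E}[g(W)]$ of the continuous-time selection indicator, and then invoking the continuous-time guarantee already established in \citeN{Frazier:BIZ}. Since $a_k>a_{k-1}$, correct selection means selecting alternative $k$, so by definition $\lim_{\delta\to0}\mathrm{PCS}(\delta)$ is exactly the left-hand side of Lemma~\ref{l:interpolation}. Applying that lemma, $\lim_{\delta\to0}\mathrm{PCS}(\delta)=\lim_{\delta\to0}\mathbb{P}(f(\mathcal{C}(\delta,\cdot),\delta)=1)$, so it suffices to identify this latter limit with $\mathbb{E}[g(W)]$ and then bound $\mathbb{E}[g(W)]$ from below.

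The crux is to show $\mathbb{P}(f(\mathcal{C}(\delta,\cdot),\delta)=1)\to\mathbb{E}[g(W)]$ as $\delta\to0$. I would fix an arbitrary sequence $\delta_n\to0$ and argue along it. By Corollary~\ref{cor:CLT}, $\mathcal{C}(\delta_n,\cdot)\Rightarrow W$ in $D_\infty^k$. The maps $f(\cdot,\delta_n)$ themselves vary with $n$ and are only known to converge to $g$ on the set $D_s$ of Lemma~\ref{l:continuity}; moreover $f$ and $g$ are $\{0,1\}$-valued and hence discontinuous, so the ordinary continuous mapping theorem does not apply. Instead I would invoke an extended (generalized) continuous mapping theorem that permits the map to depend on $n$ (as in van der Vaart and Wellner, Thm.~1.11.1): its hypothesis is precisely that there is a set $D_s$ with $\mathbb{P}(W\in D_s)=1$ such that $z_n\to z\in D_s$ forces $f(z_n,\delta_n)\to g(z)$. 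Lemma~\ref{l:continuity} supplies $\mathbb{P}(W\in D_s)=1$, and the very definition of $D_s$ supplies the required convergence, so the theorem yields $f(\mathcal{C}(\delta_n,\cdot),\delta_n)\Rightarrow g(W)$. Since these variables take values in $\{0,1\}$, weak convergence is equivalent to convergence of the probability of the value $1$, giving $\mathbb{P}(f(\mathcal{C}(\delta_n,\cdot),\delta_n)=1)\to\mathbb{E}[g(W)]$. As the limit is independent of the chosen sequence, the full limit as $\delta\to0$ exists and equals $\mathbb{E}[g(W)]$.

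To finish the first claim, I would recall that $g(W)$ is by construction the indicator of correct selection when the continuous-time BIZ procedure of \citeN{Frazier:BIZ} is applied to the $\delta$-independent transformed problem with drift vector $a$. Because $a_k-a_{k-1}\ge1$, this transformed problem lies in the continuous-time preference zone, so Theorem~1 in \citeN{Frazier:BIZ} gives $\mathbb{E}[g(W)]\ge P^*$, whence $\lim_{\delta\to0}\mathrm{PCS}(\delta)\ge P^*$. For the tightness statement, observe that $a\mapsto\mathbb{E}[g(W)]$ is exactly the continuous-time preference-zone PCS, so $\inf_{a\in PZ(1)}\lim_{\delta\to0}\mathrm{PCS}(\delta)=\inf_{a\in PZ(1)}\mathbb{E}[g(W)]$. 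The bound $\ge P^*$ is the first claim applied to each fixed $a$. For the matching upper bound I would use the tightness half of Theorem~1 in \citeN{Frazier:BIZ}: the continuous-time worst-case PCS equals $P^*$ and is approached at the slippage configuration $a_1=\cdots=a_{k-1}$ as $a_k-a_{k-1}\downarrow1$. Evaluating $\mathbb{E}[g(W)]$ along such a sequence lying in the open zone $PZ(1)$ drives it down to $P^*$, so the infimum is $\le P^*$, and equality follows.

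The main obstacle I expect is the second paragraph: justifying $\mathbb{P}(f(\mathcal{C}(\delta,\cdot),\delta)=1)\to\mathbb{E}[g(W)]$ rigorously. The difficulty is that the selection maps $f(\cdot,\delta)$ vary with $\delta$, converge to $g$ only pointwise on $D_s$ rather than uniformly, and are discontinuous, so the entire weight of the argument rests on verifying the hypotheses of the extended continuous mapping theorem; this is exactly what Lemma~\ref{l:continuity} (whose proof is deferred to \cite{fullpaper}) is engineered to guarantee. A secondary and more routine point is confirming that the continuous-time tightness from \citeN{Frazier:BIZ} transfers to the \emph{open} preference zone $PZ(1)$ through a boundary-approaching sequence, since the first claim admits the boundary case $a_k-a_{k-1}=1$ while the infimum in the second claim is taken over $a_k-a_{k-1}>1$.
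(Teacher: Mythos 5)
Your proposal is correct and follows essentially the same route as the paper's proof: reduce $\lim_{\delta\to0}\mathrm{PCS}(\delta)$ to $\mathbb{P}(g(W)=1)$ via Lemma~\ref{l:interpolation}, then Corollary~\ref{cor:CLT} plus Lemma~\ref{l:continuity} fed into an extended continuous mapping theorem (the paper cites Theorem~5.5 of Billingsley where you cite van der Vaart and Wellner; these are the same device), and finally Theorem~1 of Frazier (2014) for both the lower bound and the tightness of the infimum. Your added care about approaching the boundary of the open zone $PZ(1)$ is a minor refinement of a point the paper passes over silently, not a different approach.
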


\begin{proof}
Using the definitions given at the beginning of this section,
the selection decision of the discrete-time BIZ procedure for a particular $\delta>0$ when $\mu_{k}=a_{k}\delta,\mu_{k-1}=a_{k-1}\delta,\ldots,\mu_{1}=a_{1}\delta$ is given by
\begin{equation*}
h\left(\left(q_{tx}^{'}\left(A\right):t\in\mathbb{N},A\subset\left\{ 1,\ldots,k\right\} ,x\in A\right)\right)
\end{equation*}
and the probability of correct selection $\mbox{PCS}(\delta)$ is
\begin{equation*}
    \mbox{PCS}(\delta) =  \mathbb{P}\left( h\left( \left(q_{tx}^{'}\left(A\right):t\in\mathbb{N},A\subset\left\{ 1,\ldots,k\right\} ,x\in A\right) \right) =k\right).
\end{equation*}
By Lemma~\ref{l:interpolation}, we have that 
\begin{equation}
\lim_{\delta\to0}
\mbox{PCS}(\delta)
 = \lim_{\delta\rightarrow0}
 \mathbb{P}\left(
 f\left(\mathcal{C}\left(\delta,t\right),\delta\right)=1
 \right).
 \label{eq:convergence1}
\end{equation}

We also have, by Lemma~\ref{l:continuity} and an extension of the continuous mapping theorem (Theorem~5.5 of \citeN{billingsley:convergence}), 
\begin{equation*}
    f\left(\mathcal{C}\left(\delta,t\right),\delta\right)\Rightarrow g\left(W\left(t\right)\right)
\end{equation*}
in distribution as $\delta\rightarrow0$.
This implies that
\begin{equation}
\lim_{\delta\rightarrow0}\mathbb{P}\left(f\left(\mathcal{C}\left(\delta,t\right),\delta\right)=1\right)
 = \mathbb{P}\left(g\left(W\right)=1\right).
 \label{eq:convergence2}
\end{equation}

The random variable $g(W)$ is equal in distribution to the indicator of the event of correct selection that results from applying the continuous-time BIZ procedure from \citeN{Frazier:BIZ} in a problem with indifference-zone parameter equal to 1, where each alternative's observation process has volatility $\lambda_z$ and drift $a_x$.
This can be seen by noting that the path $(q^{W}_{tx}(A) : t\ge0)$ defined above is equal in distribution to the path $(q_{tx}(A) : t\ge0)$ defined in equation (2) of \citeN{Frazier:BIZ}, and that the selection decision of the continuous-time algorithm in \citeN{Frazier:BIZ} is obtained by applying $h$ to this path.

Theorem 1 in \citeN{Frazier:BIZ} states that 
\begin{equation}
\mathbb{P}\left(g\left(W\right)=1\right)\ge P^{*}.
\label{eq:correct}
\end{equation}

Combining 
\eqref{eq:convergence1}, 
\eqref{eq:convergence2}, and
\eqref{eq:correct}, we have
\begin{equation*}
    \lim_{\delta\to0} \mbox{PCS}(\delta) \ge P^*.
\end{equation*}

Furthermore, Theorem~1 in \citeN{Frazier:BIZ} shows that 
\begin{equation}
\mbox{inf}_{a\in PZ\left(1\right)}\mathbb{P}\left(g\left(W\right)=1\right)=P^{*}
\label{eq:tight}
\end{equation}
where $PZ\left(1\right)=\left\{ a\in\mathbb{R}^{k}:a_{k}-a_{k-1}\geq1\right\} $.

Combining 
\eqref{eq:convergence1}, 
\eqref{eq:convergence2}, and
\eqref{eq:tight}, shows
\begin{equation*}
\inf_{a\in PZ\left(1\right)}\lim_{\delta\rightarrow0}\mbox{PCS}(\delta)=P^{*}.
\end{equation*}

\end{proof}

\section{NUMERICAL EXPERIMENTS}
\label{numericalExperiments}

We now use simulation experiments to illustrate and further investigate the phenomenon characterized by Theorem~\ref{t:main}. Using the version of BIZ described in Section~\ref{sec:BIZ} with maximum elimination ($c=1-(P^*)^\frac{1}{k-1}$), we estimate and then plot the PCS as a function of $\delta$.
In all examples, $P^*=0.9$, PCS was estimated using 10,000 independent replications, and confidence intervals have length at most $0.014$.

\begin{figure}[h]
  \centering
   \begin{subfigure}[b]{0.3\textwidth}
      \centering
       \includegraphics[width=\textwidth]{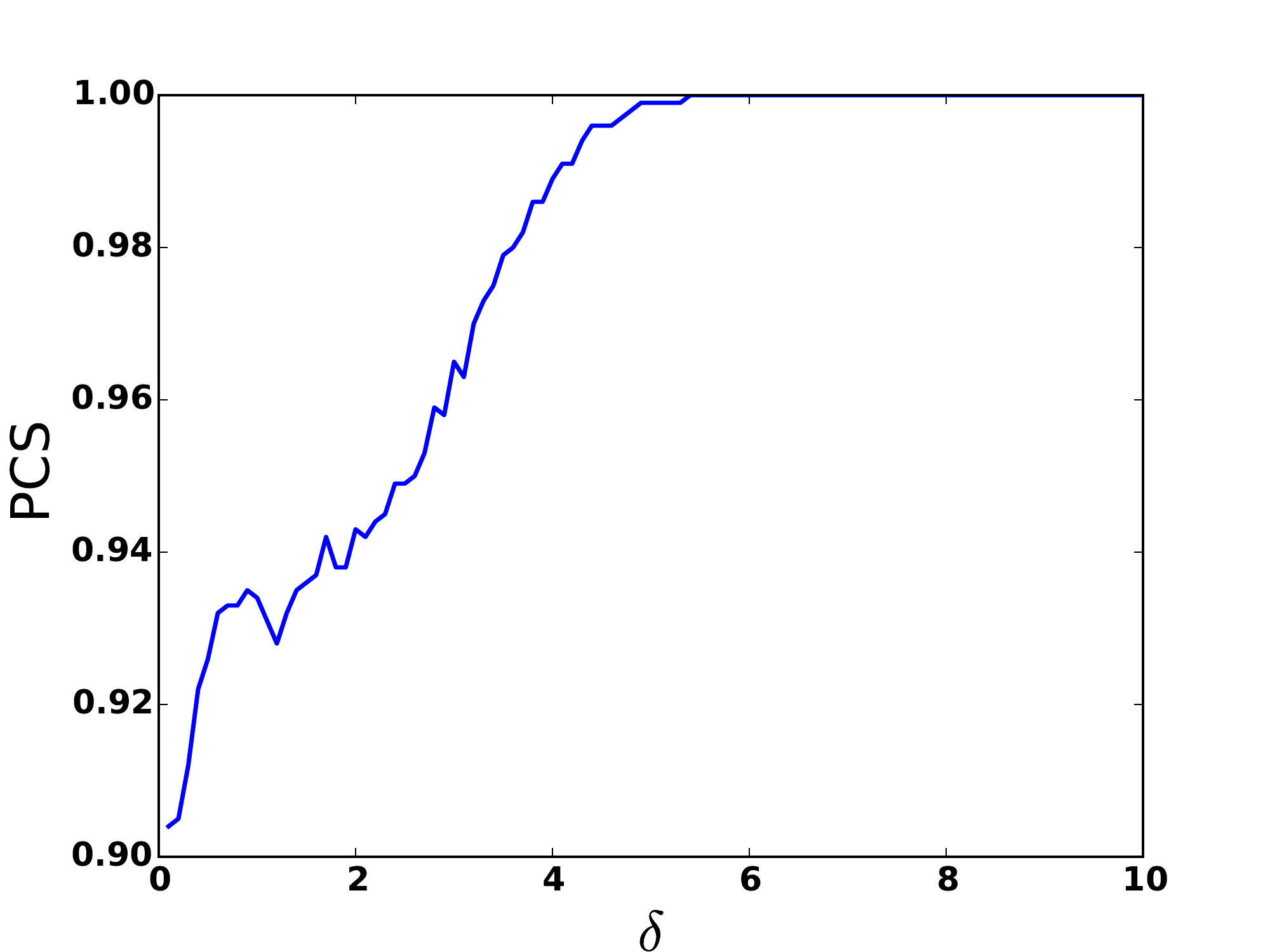}
    \caption{Known heterogeneous variances, 
        $\lambda^2_k = 0.25$, $\lambda_{1}^2 = 1$, $n_0=0$.
    \label{fig: tahi3}}
   \end{subfigure}
    \hfill
    \begin{subfigure}[b]{0.3\textwidth}
        \centering
         \includegraphics[width=\textwidth]{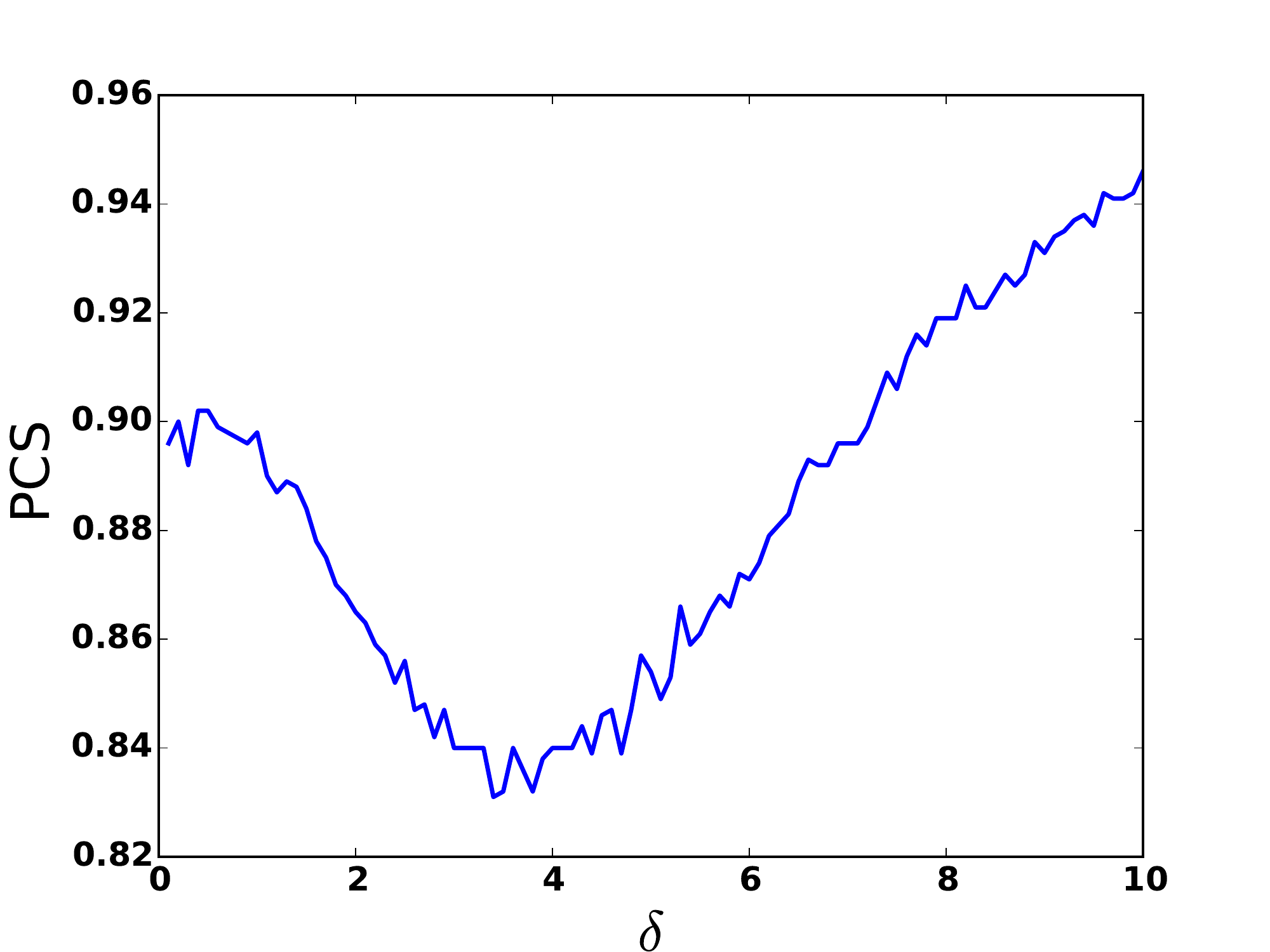}
             \caption{
    Unknown highly heterogeneous variances, 
    $\lambda^2_{k} = 100$, $\lambda_{1}^2 = 1$, $n_0=15$.
    \label{fig: tahi2}}
    \end{subfigure}
        \hfill
    \begin{subfigure}[b]{0.3\textwidth}
        \centering
         \includegraphics[width=\textwidth]{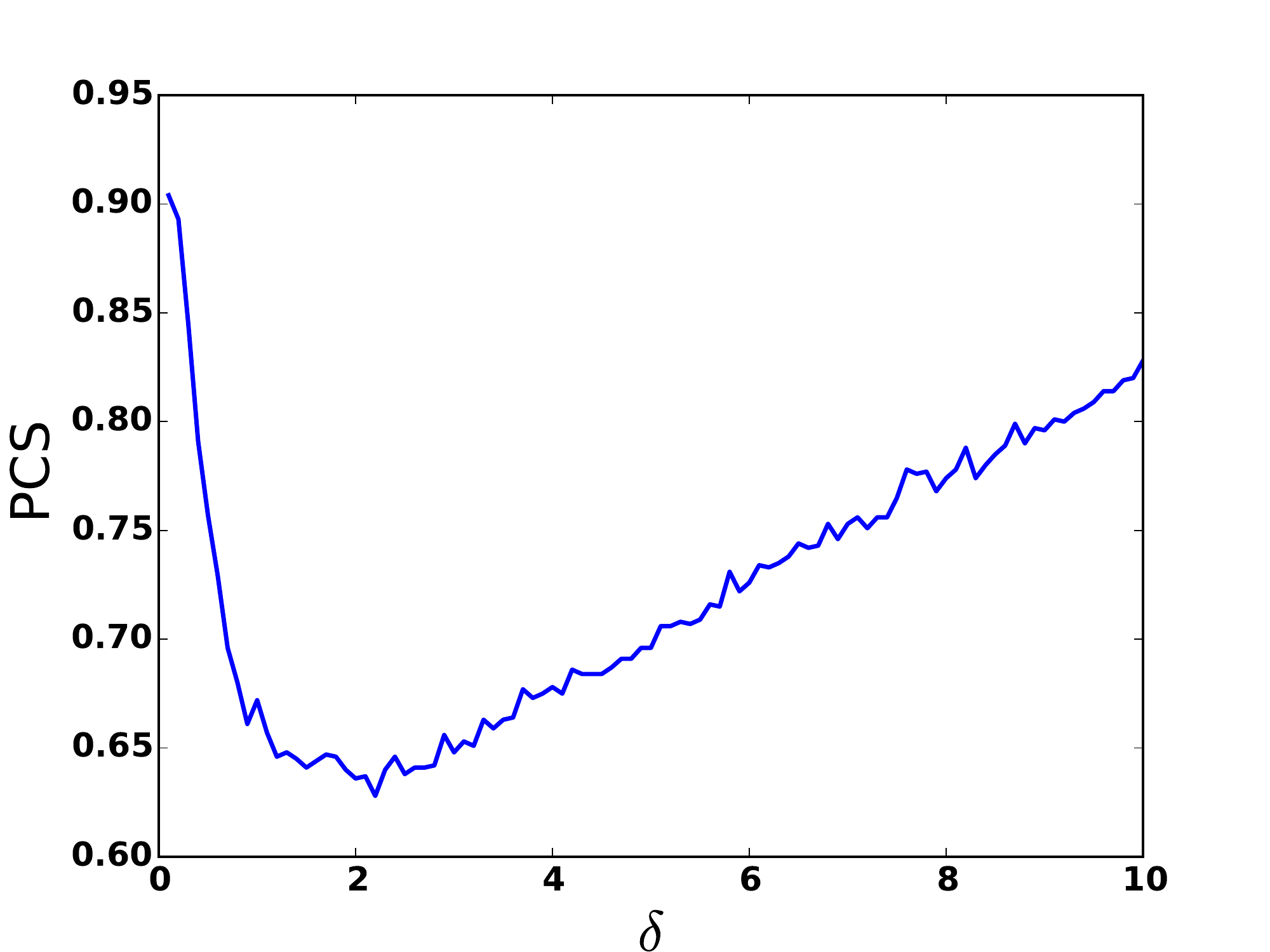}
    \caption{ 
    Known highly heterogeneous variances, 
    $\lambda^2_{k} = 100$, $\lambda_{1}^2 = 1$, $n_0=0$.
        \label{fig: tahi}}
    \end{subfigure}
\caption{
    The PCS of the BIZ procedure versus $\delta$ for three different slippage configurations with 100 alternatives and $P^*=0.9$.
    We observe in all three examples that the PCS converges to $P^*$ as $\delta$ goes to $0$.
    The first example (a) shows typical behavior, where the PCS is above $P^*$ for all values of $\delta$.
    The second (b) and third (c) examples are atypical, and were chosen specially to illustrate that BIZ can underdeliver on PCS in slippage configurations when $n_0$ is small and the variance of the best alternative is much larger than the variance of the other alternatives.
\label{fig:tahi10}}
\end{figure}  

Our first example, illustrated in Figure~\ref{fig: tahi3}, is a known variance slippage configuration where the variance of the best alternative is 1/4 of the variance of the worst alternative. Specifically, we consider $100$ systems with independent normally distributed samples, where $\mu_{k}=\delta,\mu_{k-1}=0,\dots,\mu_{1}=0$, $\delta$ is within the interval $[0.1,10]$, and $\lambda_{100}=1,\lambda_{99}=1+\frac{(0.5)(98)}{99},\cdots,\lambda_{1}=0.5$. Here, $n_{0}=0$. Figure~\ref{fig: tahi3} shows that in this example the IZ guarantee is always satisfied.  Moreover, the PCS approaches $P^*$ as $\delta$ goes to zero, as predicted by Theorem~\ref{t:main}. When $\delta$ is big enough, the PCS is almost one because the difference between the best system and the others is large enough to be easily identifiable by the BIZ procedure. 

Our second example, illustrated in Figure~\ref{fig: tahi2}, is an unknown variance slippage configuration where the variance of the best alternative is 100 times larger than the variance of the other alternatives. Although Theorem~\ref{t:main} applies only to the known-variance version of BIZ, we conjecture that the unknown-variance version of BIZ should exhibit similar behavior.
In this example, we consider $100$ systems with independent normally distributed samples, where $\mu_{100}=\delta,\mu_{99}=0,\dots,\mu_{1}=0$, $\delta$ is within the interval $[0.1,10]$, and $\lambda_{100}=10,\lambda_{99}=\cdots=\lambda_{1}=1$. We set $n_{0}=15$.
As $\delta$ goes to $0$, we observe that the PCS converges to $P^*$, as it did in the known-variance slippage configuration example.
In this example, we have intentionally chosen $n_0$ to be smaller than the recommended value of 100, and have chosen a large variance for the best system, to cause BIZ to fail to meet the IZ guarantee for $\delta>0$.
Increasing the parameter $n_0$ typically causes BIZ to meet the IZ guarantee for all $\delta$, and we recommend a larger value of $n_0$ in practice.  The choice of $n_0$, and its impact on PCS, merits further study.

Our third example, illustrated in Figure~\ref{fig: tahi}, uses the same sampling distributions as the second example, but assumes the variances are known, and sets $n_0=0$.
The effect of this change, and especially of setting $n_0$ to $0$, is to cause significant underdelivery on PCS for large values of $\delta$.  
As remarked above, this example was chosen specially to illustrate that BIZ can underdeliver on PCS in slippage configurations when $n_0$ is small, and the variance of the best alternative is much larger than the variance of the worst alternative.  However, as predicted by Theorem~\ref{t:main}, the PCS converges to $P^*$ as $\delta$ grows small, even in this pathological case.

\section{CONCLUSION}
\label{conclusion}

We have proved the asymptotic validity of the Bayes-inspired Indifference Zone
procedure \cite{Frazier:BIZ} when the variances are known.
This algorithm has been observed empirically to take fewer samples than other IZ procedures, especially for
problems with large numbers of alternatives, and so characterizing when it
satisfies the indifference-zone guarantee is important for understanding when
it should be used in practice.

\section*{ACKNOWLEDGMENTS}
Peter Frazier and Saul Toscano-Palmerin were partially supported by NSF CAREER CMMI-1254298, AFOSR FA9550-12-1-0200, and AFOSR FA9550-15-1-0038.
Peter Frazier was also partially supported by NSF IIS-1247696 and the Atkinson Center for a Sustainable Future Academic Venture Fund.
Saul Toscano-Palmerin was also partially supported by the Mexican Secretariat of Public Education (SEP).

\bibliographystyle{chicago}
\bibliography{Untitled}

\section*{AUTHOR BIOGRAPHIES}

\noindent {\bf SAUL TOSCANO-PALMERIN} is a Ph.D. student of the School of Operations Research and Information Engineering at Cornell University, Ithaca, NY. His research interest is in simulation optimization, machine learning and sequential decision-making under uncertainty. His email address is st684@cornell.edu. \\

\noindent {\bf PETER I. FRAZIER} is an Associate Professor of the School of Operations Research and Information Engineering at Cornell University, Ithaca, NY.  He holds a Ph.D. in operations research and financial engineering industrial engineering from Princeton University. His research interests include optimal learning, sequential decision-making under uncertainty, and machine learning, focusing on applications in simulation optimization, design of experiments, materials science, e-commerce and medicine. He is the secretary of the INFORMS Simulation Society.
His e-mail address is pf98@cornell.edu.\\

\end{document}